\documentclass[letterpaper, 10 pt, conference]{ieeeconf}


\IEEEoverridecommandlockouts                              
\overrideIEEEmargins
\usepackage{cite}
\usepackage{amsmath,amssymb,amsfonts}

\usepackage{algorithm}
\usepackage{algpseudocode} 
\algtext*{EndWhile}
\algtext*{EndIf}

\usepackage{graphicx}
\usepackage{textcomp}
\usepackage{xcolor}

\usepackage[utf8]{inputenc}\usepackage{amsmath,amssymb,amsfonts}
\usepackage{graphicx}
\usepackage{textcomp}
\usepackage{comment}
\usepackage{multicol}
\usepackage{xcolor}
\usepackage{paralist} 
\usepackage{cite}
\usepackage{amsmath,amssymb,amsfonts}
\usepackage{graphicx}
\usepackage{textcomp}
\usepackage{xcolor}
\def\BibTeX{{\rm B\kern-.05em{\sc i\kern-.025em b}\kern-.08em
    T\kern-.1667em\lower.7ex\hbox{E}\kern-.125emX}}
    
\usepackage{nomencl}
\makenomenclature

\usepackage{tikz}
\usetikzlibrary{arrows,automata}
\usepackage[utf8]{inputenc}\usepackage{amsmath,amssymb,amsfonts}
\usepackage{todonotes}

\usetikzlibrary{shapes.geometric,fit}
\tikzstyle{container} = [draw, rectangle, inner sep=1.5cm]

  \tikzset{main node/.style={circle,draw,minimum size=1cm,inner sep=0pt},}
\usepackage{booktabs}
\usepackage{resizegather}

\usepackage{amsmath}
\usepackage{hyperref}
\usepackage{adjustbox}

\usepackage{amssymb}
\usepackage{acronym}
\usepackage{amsthm}
\usepackage{mathtools} 

\usepackage{tikz}
\usetikzlibrary{arrows,automata}
\usetikzlibrary{shapes.geometric,fit}
\tikzstyle{container} = [draw, rectangle, inner sep=1.5cm]

  \tikzset{main node/.style={circle,draw,minimum size=1cm,inner sep=0pt},}
\usepackage{algorithm}
\usepackage{algpseudocode} 
\usepackage{listings}

\newif\ifuseboldmathops
\newif\ifuseittextabbrevs
\useboldmathopstrue   

\ifuseittextabbrevs

	\newcommand{\ie}{{\it i.e.}}
	
	\newcommand{\etal}{{et~al.}}
\else

	\newcommand{\ie}{i.e.}
	
	\newcommand{\etal}{et~al.}
\fi

\ifuseboldmathops

\else

\fi

\ifuseboldmathops

\else

\fi

\ifuseboldmathops

\else

\fi

\ifuseboldmathops


\else

	
\fi




\newcommand{\Eventually}{\Diamond \, }

\newcommand{\until}{\mbox{$\, {\sf U}\,$}}

\newcommand{\dist}[1]{\mathcal{D}(#1)}

\newcommand{\calF}{\mathcal{F}}
\newcommand{\calAP}{\mathcal{AP}}

\newcommand{\win}{\mathsf{Win}} 


\acrodef{mdp}[MDP]{Markov Decision Process}
\acrodef{pomdp}[POMDP]{Partially Observable Markov Decision Process}


\theoremstyle{definition}
 \newtheorem{definition}{Definition}
 \newtheorem{example}{Example}
\newtheorem{problem}{Problem}

\newtheorem{lemma}{Lemma}
\newtheorem{assumption}{Assumption}
\newtheorem{proposition}{Proposition} \newtheorem{theorem}{Theorem}

\newcommand{\hgame}{\mathcal{HG}}

\newcommand{\pre}{\mathsf{Pre}}

\newcommand{\calA}{\mathcal{A}}
\newcommand{\game}{\mathcal{G}}

\acrodef{dfa}[DFA]{Deterministic Finite-State Automaton}
\acrodef{scltl}[scLTL]{syntactically co-safe LTL}
\acrodef{ltl}[LTL]{Linear Temporal Logic}


\renewcommand{\path}{\mathsf{Path}}

\title{Deceptive Labeling: Hypergames on Graphs for Stealthy Deception}
\author{Abhishek N. Kulkarni, Huan Luo, Nandi O. Leslie,  Charles A. Kamhoua, and Jie Fu
\thanks{A. Kulkarni and J. Fu are with the Robotics Engineering Program and Dept. of Electrical and Computer Engineering, Worcester Polytechnic Institute, Worcester, MA 01609 USA.
{\tt\small \{ankulkarni,jfu2\}@wpi.edu}}
\thanks{H. Luo is a visiting student with Dr. Fu at the WPI from Sept to Nov, 2019. {\tt\small hluo12@126.com}}
\thanks{N. Leslie and C. Kamhoua are  with U.S. Army Research Laboratory.
{\tt\small \{nandi.o.leslie.ctr, charles.a.kamhoua.civ\}@mail.mil}}
}

\begin{document}

\maketitle
\thispagestyle{empty}
\pagestyle{empty}

 \begin{abstract}
 With the increasing sophistication of attacks on cyber-physical systems, deception has emerged as an effective tool to improve system security and safety by obfuscating the attacker's perception. In this paper, we present a solution to the deceptive game in which a control agent is to satisfy a Boolean objective specified by a co-safe temporal logic formula in the presence of an adversary. The agent intentionally introduces asymmetric information to create payoff misperception, which manifests as the misperception of the labeling function in the game model. Thus, the adversary is unable to accurately determine which logical formula is satisfied by a given outcome of the game.
 We introduce a model called hypergame on graph  to capture the asymmetrical information with one-sided payoff misperception. Based on this model, we  present the solution of such a hypergame and use the solution to synthesize stealthy deceptive strategies. Specifically,  deceptive sure winning and deceptive almost-sure  winning strategies are developed by reducing the hypergame to a two-player game and one-player stochastic game with reachability objectives. A running example is introduced to demonstrate the game model and the solution concept used for strategy synthesis.
 \end{abstract}

\keywords
Formal methods-based control; Linear Temporal logic; games on graphs; hypergame theory.
\endkeywords

\section{Introduction}
    \label{sec:introduction}
    
    With the increasing sophistication of the attacks on cyber-physical
    systems, deception has emerged as a tool to mitigate the strategic
    and informational disadvantages of the defender.   In this paper,
we consider a class of games where a control agent (P1, pronoun `he') plays against its adversary (P2, pronoun `she') to
satisfy a temporal logic formula, which describes high-level constraints such as safety, reachability, liveness, and reactivity \cite{Pnueli1989}. However, the task cannot be achieved if the
adversary knows the exact game.  
Thus, the agent needs to
falsify or obfuscate information to the adversary in order
to satisfy its temporal logic specification. The question arises, how to synthesize \emph{provably correct and deceptive} strategies that exploits the information advantages?

    The class of games where players' payoffs are Boolean valued (1 for satisfying the formula and 0 otherwise) is known as games on graphs (or $\omega$-regular games).
    The solution concepts of the games on graphs have been studied in formal synthesis of reactive systems \cite{gradel2002automata,bloemGraphGamesReactive2018a,chatterjee2012survey} and supervisory control \cite{lafortune2019discrete}. 
    However, existing work \cite{bloemGraphGamesReactive2018a,chatterjee2012survey} assumes both players have access to the correct model of the game. This is not the case when one player (deceiver, P1) can provide misleading information or intentionally hide information to the other for strategic advantages.
    In this paper, we study a class of asymmetric information games in which P1 has complete information about the game, but he intentionally \emph{falsifies or obfuscates P2's perception of one game component--the labeling function}, which maps an outcome (sequence of game states) to a Boolean payoff of \emph{one} if the temporal logic formula was satisfied or \emph{zero} otherwise. Such deception techniques are commonly used in decoy-based cybersecurity (such as honey-X) and defense (such as camouflage) \cite{jajodia2016cyber, handel2012war}.

To synthesize deceptive strategies for P1, we model the interaction between the two players as a hypergame \cite{bennettHypergameTheoryMethodology1986}. A hypergame models the situation where the players have different perceptions of their interaction given their information, and higher-order information (information about other's information).
We extend the normal-form hypergame model to define the
\emph{hypergame on graph} model to capture the perceptual games of the players and
their knowledge about the opponent's perceptual game. To solve for P1's
deceptive strategies, we adopt the solution concept of subjective
rationalizability \cite{sasakiHierarchicalHypergamesBayesian2016} from incomplete information game theory. A
subjective rationalizable player behaves rationally and assumes the
other player to act rationally in his/her subjective view of the game.
Thus, whenever P1 deviates from his rational strategy in P2's subjective view, we expect P2 to become aware of the information asymmetry.
Using this observation, we establish the
necessary and sufficient conditions for the deceptive strategies to be (a)
stealthy sure winning, and (b) stealthy almost-sure winning
(\ie,~winning with probability one). A stealthy strategy ensures
that P2 does not become aware of the information asymmetry until P1 can ensure to satisfy the temporal logic
specification irrespective of P2's actions. These solution concepts
for hypergames on graphs not only provide the provably-correct deceptive
strategies for P1 but also provide a way to assess
the effectiveness of deception and its potential limitations. 


\paragraph*{Related Work}
Game theory for deception has been investigated extensively using the two models of incomplete information games: hypergames \cite{bennettHypergameTheoryMethodology1986,wangSolutionConceptsHypergames1989,kovachHypergameTheoryModel2015} and Bayesian games \cite{carrollGameTheoreticInvestigation2009,al-shaerDynamicBayesianGames2019}. 
Hypergames were initially proposed and studied for the normal-form one-shot games \cite{bennettHypergameTheoryMethodology1986,wangSolutionConceptsHypergames1989} and later studied by Gharesifard and Cort\'es \cite{gharesifardStealthyDeceptionHypergames2014,gharesifardEvolutionPlayersMisperceptions2012} for repeated games. Gharesifard and Cort\'es developed an H-digraph model to monitor how a player's perception evolves during repeated interactions and to design stealthy deceptive strategy in which the deceiver's action does not contradict the perception of the mark. 
Bayesian games \cite{Harsanyi1967} are used to design deceptive
strategies in cybersecurity applications
\cite{carrollGameTheoreticInvestigation2009,al-shaerDynamicBayesianGames2019,thakoorCyberCamouflageGames2019a}. Dynamic
Bayesian games are used in \cite{al-shaerDynamicBayesianGames2019} for
active deception in cybernetwork, where the defender has incomplete
information about the type of the attacker (legitimate user or
adversary) and the attacker also is uncertain about the type of the
defender (high-security awareness or low-security awareness). 
Ornkar \etal \cite{thakoorCyberCamouflageGames2019a} formulate a
security game (Stackelberg game) to allocate limited decoy resources
in a cybernetwork to mask network configurations from the attacker.
Existing deception in games describes players' payoffs
 using rewards/costs. We choose to adopt the hypergame
model over Bayesian
games 
because the hypergame model facilitates the analysis of 
higher-order information. We also employ the subjective
rationalizability solution concepts in hypergames \cite{sasaki2014subjective}.

\section{Preliminaries}
\label{sec:prelim}

We begin with a brief overview of $\omega$-regular games
\cite{gradel2002automata}.  An $\omega$-regular game, hereafter referred
to as a game, is a tuple $\game = \langle G, \varphi \rangle$ which
consists of a game arena $G$, representing the dynamics of the
interaction between P1 and P2, and a \ac{ltl} specification $\varphi$
for P1. In this work, we consider turn-based, deterministic game
arenas and syntactically co-safe \ac{ltl} specifications. We formalize
these concepts below.



\paragraph*{\textbf{Game Arena}}

	A turn-based, deterministic game arena is a tuple 
$	    G = \langle S, A, T, s_0, \calAP, L \rangle
$
	where $S = S_1\cup S_2$ is a finite set of states partitioned into P1's states, $S_1$, and P2's states, $S_2$; $A = A_1 \cup A_2$ is the set of actions where $A_1$ and $A_2$ are the sets of actions for P1 and P2, respectively; $T : (S_1 \times A_1) \cup (S_2 \times A_2) \rightarrow S$ is a \emph{deterministic} transition function that maps a state-action pair to a next state. If there exists a state $s' \in S$ such that $T(s, a) = s'$, then we say that action $a$ is \emph{enabled} at $s$; $s_0 \in S$ is called the initial state of $G$; $\calAP$ is the set of atomic propositions; $L: S \rightarrow 2^\calAP$ is the labeling function that maps a state $s \in S$ to a subset $L(s)\subseteq \calAP$ of  propositions which evaluate `true' at $s$.

A \emph{path} $\rho=s_0s_1\ldots $ in $G$ is a sequence of states such that for any $i \ge 0$, there exists $a \in A$ for which $T(s_i, a) = s_{i+1}$. A path $\rho$ can be mapped to a word over $2^\calAP$ by using a labeling function $w = L(\rho) = L(s_0)L(s_1)\ldots $, which can be evaluated against logical formulas.  



\paragraph*{\textbf{Payoffs in Linear Temporal Logic}} 
    Given the set of atomic propositions, $\calAP$, an \ac{ltl} formula is inductively defined as:
    \[ 
        \varphi := \top \mid \bot \mid p \mid \varphi \mid \neg\varphi \mid \varphi \land \varphi \mid \bigcirc \varphi \mid \varphi {\until} \varphi, 
    \] 
    where  $\top,\bot$ are universally true and false, respectively,  $p \in \calAP$ is an atomic proposition,  $\bigcirc$  is a temporal operator called the ``next'' operator. $\until$ is a temporal operator called the ``until'' operator. 
    For details about the syntax and semantics of \ac{ltl},  readers are referred to \cite{Pnueli1989}.

In this work, we restrict the objectives of P1 to a subclass of \ac{ltl} called \ac{scltl} \cite{kupferman2001model}. An \ac{scltl} formula $\varphi$ is equivalently expressed as a \ac{dfa}, defined by a tuple
 $   \calA = \langle Q, \Sigma, \delta,\iota, F \rangle$
 which consists of a finite set $Q$ of states; a finite set of symbols $\Sigma = 2^\calAP$; a deterministic transition function $\delta: Q \times \Sigma \rightarrow Q$;  an unique initial state $\iota \in Q$; and a set $F$ of final states. We extend the transition function over words $u \in \Sigma^\omega$ to write $\delta(q,uw) = \delta( \delta(q, u), w)$. A word $w$ is \emph{accepted} by $\calA$ if and only if there exists a finite prefix $u$ such that $w = uv$ for some $v \in \Sigma^\omega$, and  $\delta(q, u)\in F$.
 Given a path $\rho$ in $G$, we say $\rho$ satisfies $\varphi$ over $G$, if and only if $L(\rho)$ is accepted by the \ac{dfa} $\calA$. 



 \paragraph*{\textbf{Zero-sum Game on a Graph}} 
 Given a game arena $G$ and the \ac{scltl} specification $\varphi$ of P1, 
	a zero-sum game on a graph is a tuple, $\game = (G, \varphi)$. 
	For a path   $\rho \in S^\omega$ in $G$, if the labeling  $L(\rho)$ satisfies $\varphi$, then the path is winning for P1.
	 Otherwise, it is winning for P2. 

	 Next, we construct a \emph{product game}  for solving the zero-sum game $\game$--that is,  determining from the initial state $s_0$, whether a player can enforce a  path winning for him, regardless of the actions of the other player.

    \begin{definition}[Product game]
        \label{def:product}
        Given an arena $G = \langle S, A, T, s_0, \calAP, L \rangle$ and a \ac{dfa} $\calA = \langle Q, \Sigma, \delta, \iota, F \rangle$ equivalent to the \ac{ltl} specification of P1 $\varphi$, the product game is a tuple
        $
        G \otimes \calA = \langle S \times Q, A, \Delta, (s_0, q_0), S \times F \rangle,
        $
    where 
        $S \times Q$ is a set of states partitioned into P1's states $S_1 \times Q$ and P2's states $S_2 \times Q$;
        $\Delta: (S_1 \times Q \times A_1) \cup (S_2 \times Q \times A_2) \rightarrow S \times Q$ is a \textit{deterministic} transition function that maps a game  state  $(s,q) $ and an action $a$ to a next state $(s',q')$ where $s' = T(s,a)$ and $q' = \delta(q, L(s'))$;  
        $(s_0, q_0) \in S \times Q$ where $q_0 = \delta(\iota, L(s_0))$ is the initial state of the product game;
        $S \times F \subseteq S \times Q$ is a set of final states. 
    \end{definition}

We slightly abuse the notation to denote the product game graph as $\game := G \otimes \calA$.
A path $\rho = (s_0, q_0), (s_1, q_1), \ldots$ in the product game $\game$ is a sequence of states in $\game$. By definition of this product game, the project of $\rho$ onto $S$, $s_0s_1\ldots $,  is a path in $G$ and satisfies $\varphi$
if and only if there exists $(s_i, q_i) \in \rho$ for some $i \geq 0$ such that $(s_i, q_i) \in S \times F$.  

Thus, P1 can win (or ensure a run to satisfy $\varphi$) by reaching the set $S\times F$ in the product game. P2 can win by always avoiding $S\times F$.  Thus, the product game is a \emph{reachability game} for P1 and a \emph{safety game} for P2. 


    
    In the product game, a randomized, memoryless\footnote{A memoryless strategy in $\game$ is a finite memory strategy in $G$, where the memory is represented by states in \ac{dfa} $\calA$.} \emph{strategy} for player $i$, for $i\in\{1,2\}$, is a function $\pi_i: S_i\times Q \rightarrow \dist{A_i}$, where $\dist{A_i}$ is the set of discrete probability distributions over $A_i$.
A   deterministic strategy $\pi_i:S\times Q\rightarrow A_i$ maps a state $(s,q)$ to an action.
We say that  player $i$ commits to a strategy $\pi_i$ if and only if for a given state  $(s,q)$, if $\pi_i(s,q)$ is defined, then an action is sampled from the distribution $\pi_i(s,q)$ (or the action $\pi_i(s,q)$ is taken if $\pi_i$ is deterministic), otherwise, player $i$ selects an action at random.   Let $\Pi_i$ be the set of strategies of player $i$. 
A strategy $\pi_1 \in \Pi_1$ is said to \emph{sure winning} at a state $(s, q) \in S \times Q$ if, for any $\pi_2 \in \Pi_2$, P1 is guaranteed to satisfy $\varphi$ within $0 \leq k < \infty$ steps for a determined upper bound  $k$  on the number of steps. 
A strategy $\pi_1 \in \Pi_1$ is said to \emph{almost-sure winning} at a state $(s, q) \in S \times Q$ if, for any $\pi_2 \in \Pi_2$, P1 is guaranteed to satisfy $\varphi$ with probability one, \ie,~P1 might require unbounded number of steps to satisfy $\varphi$.  
A pair $\langle \pi_1,\pi_2\rangle $ of strategies is a \emph{strategy profile}. 

The games in Def.~\ref{def:product} are determined
\cite{mcnaughton1993infinite,zielonka1998infinite}: From any state
$(s, q) \in S \times Q$ exactly one of P1 and P2 has a memoryless
sure winning strategy. 
This result allows us to partition the game
state space as $S \times Q = \win_1 \cup \win_2$. Here, $\win_1$
includes all the states from which P1 has a sure winning strategy and
$\win_2$ includes all the states from which P2 has a sure winning
strategy. 
Readers are referred to
\cite{zielonka1998infinite} and Chapter 2 of \cite{gradel2002automata}
for the details of the game solution.

\section{Game on Graph with Labeling Misperception}
\label{sec:main-result}

In security and defense applications, players often have incomplete asymmetric information about the game. For instance, in a decoy-based deceptive defense approach, only the defender knows which hosts are decoys but the attacker does not. Such situations can be understood as the attacker  `misperceives' the labels of  states in the  game. We  introduce a hypergame model to analyze the effect of P2 misperceiving the true labeling function and how P1 can leverage P2's misperception to synthesize deceptive strategies.


\subsection{Hypergame Model}


    
        
        

    \begin{definition}[Hypergame \cite{bennettHypergameTheoryMethodology1986}]
	A level-1 two-player  hypergame is a pair
	$\hgame^1 = \langle \game_1, \game_2 \rangle,$ where
	$\game_1,\game_2$ are games perceived by players P1 and P2,
	respectively.  A level-2 two-player hypergame is a pair
	$\hgame^2= \langle \hgame^1, \game_2 \rangle,$ where P1 perceives the
	interaction as a level-1 hypergame and P2 perceives the interaction
	as game $\game_2$. The first component of a hypergame is called \emph{perceptual game} of P1; and the second component is called \emph{perceptual game} of P2.
      \end{definition}
    
    While it is possible to define a level-$k$ hypergame (see \cite{wangSolutionConceptsHypergames1989}), we note that a level-2 hypergame is sufficient to model the game with asymmetric information, since P1 knows $\hgame^1$ and P2 is only aware of $\game_2$. 
    

\noindent \textbf{Information Structure} In this paper, we are interested in games with asymmetric information of labeling function. Specifically, both P1 and P2 know the following components $S$, $A$, $s_0$,  $T$ of the arena $G$, and P1's objective $\varphi$. However,  P1 has complete information about the labeling function, $L_1(s)= L(s)$ for all $s\in S$, and P2 has misperception: There exists at least one state $s\in S$,  $L_2(s) \neq L(s)$. Moreover,  
 P1 is aware of P2's perceived labeling function $L_2$. 
 
 This information structure captures  decoy-based deception and camouflage. For example, the attacker misperceives a honeypot to be a regular host and the defender is aware of the attacker's misperception.


\begin{definition}[Level-2 Hypergame with Labeling Misperception]
	\label{def:level2}
	Given the information structure and the perceptual games, $\game_1 = \langle G_1, \varphi \rangle$ with  $G_1 = \langle S, A, T, s_0, \calAP, L_1 \rangle$ and $\game_2 = \langle G_2, \varphi \rangle$ with $G_2 = \langle S, A, T, s_0, \calAP, L_2 \rangle$, the interaction between P1 and P2 is a level-2 hypergame $
	    \hgame^2 = \langle \hgame^1, \game_2 \rangle,
	$
	where $\hgame^1 = \langle \game_1, \game_2 \rangle$ is the level-$1$ hypergame.
\end{definition}

Given two games, $\game_1$ and $\game_2$, we can use their product
games to obtain the solutions, which yield different partitions of the
product state space $S\times Q$. Let $\win_1^k$
(resp,. $\win_2^k$) represent the winning region of P1 (resp., P2) in
$\game_k$. From the winning regions, the winning strategies can be
extracted by construction (See Chapter 2 of \cite{gradel2002automata} for
details). To illustrate the solution, we introduce a running example.



\begin{example}
\label{ex:1}
    In the game arena, $G$, (see Fig.~\ref{fig:arena}), we have two players: P1 (circle) and P2 (square). P1 chooses an action at a circle state, and P2 selects an action at a square state. Given the transitions are deterministic, we omit the action set and use the edges of the graph to refer to players' actions. 
    
    \begin{figure}[h]
        \centering
            \begin{tikzpicture}[->,>=stealth',shorten >=1pt,auto,node distance=2cm,
                            semithick, scale=0.5, transform shape,  square/.style={regular polygon,regular polygon sides=4}]
        \tikzstyle{every state}=[fill=white]
        \node[initial, state]   (0)                      {$0$};
        \node[square,draw]           (1) [right   of=0]   {$1$};
        \node[state, fill=red!10] (2) [right of =1] {$2$};
        \node[state, fill=red!10] (3) [below right of=2] {$3$};
        \node[square,draw] (4) [above of=3] {$4$};
        \node[state, fill=blue!10] (5) [ right of =4] {$5$};
        \node[state, fill=blue!10] (6) [right of=5] {$6$};
        \node[square, draw,, fill=blue!10] (7) [right of=6] {$7$};
        \path[->]   (0) edge node    {}   (1)
                    (1) edge [bend left] node {} (0)
                    (1) edge node {} (2)
                    (1) edge [bend left] node {} (4)
                    (2) edge [bend left] node {} (1)
                    (3) edge  node {} (2)
                    (3) edge  node {} (4)
                    (4) edge [bend left] node {} (3)
                     (4) edge  node {} (5)
                    (5) edge [bend left] node {} (4)
                     (5) edge  node {} (6)
                    (6) edge [bend left] node {} (5)
                     (6) edge  node {} (7)
                    (7) edge [bend left] node {} (6)
                      (7) edge  [bend right] node {} (5);
    \end{tikzpicture}
              \vspace{-2ex}
        \caption{A game arena, $G$. The red  (resp. blue) nodes are  P1's winning region  $\win_1^2$ in $\game_2$ (resp. $\win_1^1$ in $\game_1$).}
               \label{fig:arena}
    \end{figure}
    Let $L$ be defined such that $L(5) = \{A\}$ and
    $L(s) = \emptyset$ for $s\ne 5$. And $L_2 $ is defined such that  $L_2(2) = \{A\}$ and  $L_2(s) = \emptyset$ for $s\ne 2$. The objective of P1 is $\varphi = \Eventually A$,
    \ie,~eventually reaching a state labeled $A$. The \ac{dfa}
    equivalent to $\varphi$ is shown in Fig.~\ref{fig:dfa}.
    
    \begin{figure}[ht!]
        \centering
                \vspace{-2ex}
            \begin{tikzpicture}[->,>=stealth',shorten >=1pt,auto,node distance=2cm,
                            semithick, scale=0.5, transform shape,  square/.style={regular polygon,regular polygon sides=4}]
        \tikzstyle{every state}=[fill=white]
        \node[initial, state]   (0)                      {$0$};
        \node[accepting,state]           (1) [right   of=0]   {$1$};
        \path[->]   (0) edge node    {A}   (1)
                    (0) edge [loop above] node {$\emptyset$} (0)
                    (1) edge [loop right] node {$\emptyset$} (1);
    \end{tikzpicture}
        \caption{The \ac{dfa} for $\varphi= \Eventually A$.}
        \label{fig:dfa}
        \vspace{-2ex}
      \end{figure}
      Due to the simplicity of \ac{dfa}, we can directly solve
      $\game_1$ and $\game_2$ by marking the final set $\cal F$ for P1 to
      reach in the arena. In $\game_1$, the set to reach is
      $\{5\}$. The solution of $\game_1$ yields
      $\win_1^1 = \{5, 6, 7\}$ and $\win_2^1 = \{0, 1, 2, 3,
      4\}$. Whereas, In $\game_2$, the set to reach is $\{2\}$. The
      solution of $\game_2$ yields $\win_1^2 = \{2, 3\}$ and
      $\win_2^2 = \{0, 1, 4, 5, 6,
      7\}$. 
      In the true game $\game_1$, P2 can win the game by choosing the
      edge $(4, 3)$, but in her perceptual game $\game_2$, P2
      considers the action $(4, 5)$ to be winning or, in other words,
      rational.
\end{example}

We now introduce the solution concept of subjective rationality to
hypergames on graphs.
Let $S^+$ be paths of length $\ge 1$. Let $u_1: S^+  \times \Pi_1\times \Pi_2 \rightarrow [0, 1]$  be the utility function of P1 such that $u_1(\rho, \pi_1,\pi_2)$ is the probability of satisfying the specification $\varphi $ given that players commit to the strategy profile $\langle \pi_1,\pi_2 \rangle$ for a given history $\rho \in S^+$. 
The utility function for P2 is $u_2(\rho, \pi_1,\pi_2) =1- u_1(\rho,
\pi_1,\pi_2)$. We denote $u_i^j$ the \emph{utility function of player $i$ perceived by player $j$.}

\begin{definition}[Subjective Rationalizability]
Given a level-2 hypergame  $\hgame^2  = \langle \hgame^1 , \game_2
\rangle$ and the path $\rho \in S^+$,  strategy $\pi_i^\ast:
S^+\rightarrow \dist{A_i}$ (resp.,$\pi_j^\ast$) is \emph{subjective rationalizable} for P2 if and only if for all $\pi_i \in \Pi_i$, we have 
$u^2_i(\rho, \pi_i^\ast,\pi^\ast_{j}) \ge  u^2_i(\rho,  \pi_i,\pi^\ast_{j}),$
where $(i,j)\in \{(1,2), (2,1)\}$. The strategy $\pi_1^\ast$ is subjective rationalizable for P1 if and only for all $ \pi_1\in \Pi_1$,
$
u_1^1(\rho, \pi_1^\ast, \pi_2^\ast ) \ge 
u_1^1(\rho, \pi_1, \pi_2^\ast ),$
where $\pi_2^\ast$ is subjective rationalizable for player 2. 
\end{definition}

    
    


In words, a strategy is called subjectively rationalizable for  player $i$ if it is the best response in that player's perceptual game to some strategy of player $j$, which, for player $j$, is the best response to player $i$ in player $j$'s subjective view of player $i$'s perceptual game.  

  We now formally define the subjective rationalizable actions in $\game_2$. 


\begin{definition}[Subjective rationalizable actions in $\game_2$]
  For a given state $(s,q)$ in $\game_2$, an action of player $i$, for $i=1,2$, is subjective rationalizable for P2 if it has a non-zero probability to be selected by a subjectively rationalizable strategy of player $i$ in P2's perceptual game $\game_2$.
  \end{definition}

\begin{assumption}
    \label{assume:P2change-information-action}  
    Subjective rationalizability is a common knowledge between P1 and P2. 
\end{assumption}

Assumption~\ref{assume:P2change-information-action} means that both players know that their opponent is subjectively rational  and that the opponent is aware of this fact. Thus, we can say that P2 would become aware of her misperception, \ie,~$\game_2 \neq \game_1$, whenever P1 uses an action which is not subjectively rationalizable in P2's perceptual game, $\game_2$. 
Thus, we define the notion of a \textit{stealthy} deceptive winning strategy over a graphical model---a hypergame transition system---that effectively allows P1 to track histories in both $\game_1$ and $\game_2$.


\begin{definition}
    \label{def:hypergameTS}
    
    Given  games $\game_1 = \langle G_1, \varphi \rangle$ and $\game_2 = \langle G_2, \varphi \rangle$, a \emph{hypergame transition system} (HTS) is a tuple,
    \[
        \mbox{HTS} = \langle S \times Q \times Q, A, \Delta, (s_0, q_0, p_0), \win_1^1 \times Q \rangle,
    \]
    where 
    \begin{inparaenum}
        \item the transition function $\Delta$ is defined as follows:  given $(s, q, p), (s', q', p') \in S \times Q \times Q$, $\Delta((s, q, p), a) = (s', q', p')$ for some $a \in A$ if and only if $s' = T(s, a)$ and $q' = \delta(q, L_1(s'))$ and $p' = \delta(p, L_2(s'))$; and
        \item the initial state is $(s_0,q_0,p_0)$ where $s_0$ is the initial state in the game arena, $q_0 =  \delta(\iota, L_1(s_0))$, and $p_0 = \delta(\iota, L_2(s_0))$.
        \item $\win_1^1\times Q= \{(s,q,p)\mid (s,q)\in \win_1^1\}$.
    \end{inparaenum}
\end{definition}



     

\begin{definition}[Stealthy deceptive winning strategy]
A strategy $\pi_1: S\times Q\times Q \rightarrow \dist{A_1}$ defined on the $\mbox{HTS}$ is \emph{stealthy deceptive (sure/almost-sure) winning} in the hypergame $\hgame^2$ (in Def.~\ref{def:level2}) if the following two conditions are satisfied: 1)    \textsl{Stealthy}: For any $(s,q,p) \in S_1 \times Q \times Q\setminus \win_1^1\times Q$, then $\pi_1((s,q,p),a)>0$ only if action $a$
      is subjective rationalizable for P1 in $\game_2$;
     2) \textsl{Winning}: By committing to $\pi_1$, P1 ensures to visit a state in  $\win_1^1\times Q$, no matter which subjective rationalizable strategy that P2 commits to. 
    
A state $(s,q,p) \in S\times Q\times Q$ is \emph{stealthy deceptive (sure/almost-sure) winning} if P1 has a \emph{stealthy deceptive (sure/almost-sure) winning}  strategy at that state.
\end{definition}

We now formally state our problem:
\begin{problem}
Given a hypergame on graph $\hgame^2$ in Def.~\ref{def:level2} and
Assumption~\ref{assume:P2change-information-action}, how to synthesize a stealthy deceptive sure/almost-sure winning strategy for P1?
\end{problem}

       


\subsection{Synthesis of a stealthy deceptive sure winning strategy}
\label{sec:sure-win}
For P1's deceptive strategy to be stealthy, he must choose actions that are subjective rationalizable in P2's perceptual game until reaching the winning region $\win_1^1$. At the same time, a rational P2 takes subjective rationalizable actions in $\game_2$ unless she becomes aware of the misperception. 


\begin{lemma} In a turn-based deterministic perceptual game $\game_2$,
  for a state $(s,q)\in S\times Q$, an action $a$ is subjective
  rationalizable for player $i$ if and only if it satisfies either
  condition: 1) $(s,q)\in \win_i^2$ and $\Delta((s,q),a)\in \win_i^2$;
   2) $(s,q)\notin \win_i^2$ and $a$ is enabled from $s$. 
\end{lemma}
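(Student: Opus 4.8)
The plan is to reduce the statement to the classical theory of the determined zero-sum reachability/safety game $\game_2$. Since $a \in A_i$, the action is available only at a state $(s,q)$ with $s \in S_i$, so I restrict attention to such states. Recalling from the preliminaries that $\game_2$ is determined, the state space splits as $S \times Q = \win_i^2 \cup \win_j^2$ with $(i,j)$ denoting the two players, the perceived payoff $u_i^2$ is Boolean (equal to one exactly on the plays that are winning for $i$), and the value of $\game_2$ at $(s,q)$ for player $i$ is $1$ iff $(s,q) \in \win_i^2$. The first step is to argue that in this zero-sum determined game a subjectively rationalizable strategy for a player coincides with an optimal (value-achieving) strategy: a best response against a best response in a zero-sum game attains the game value, and conversely any value-achieving strategy is a best response to the opponent's value-achieving strategy. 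In particular, from any state in a player's own winning region an optimal strategy of that player guarantees the win, while from the opponent's winning region every strategy is optimal because the value is already $0$. With this reformulation, subjective rationalizability of $a$ becomes the statement that $a$ lies in the support of some optimal strategy of player $i$ at $(s,q)$. I then split on whether $(s,q) \in \win_i^2$, which are precisely the two cases of the lemma.

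For condition~1, assume $(s,q) \in \win_i^2$. For the ``if'' direction I would suppose $\Delta((s,q),a) = (s',q') \in \win_i^2$ and construct the strategy that plays $a$ at $(s,q)$ and thereafter follows a memoryless sure-winning strategy from $(s',q')$, which exists by determinacy; this guarantees a win from $(s,q)$, hence attains the maximal value $1$, so it is a best response against any opponent strategy and in particular forms a rationalizable pair with the opponent's optimal strategy, witnessing that $a$ is rationalizable. For the ``only if'' direction I would suppose instead $\Delta((s,q),a) \in \win_j^2$; then the opponent's optimal, hence rationalizable, strategy maintains its win from $\win_j^2$, so any strategy of player $i$ that plays $a$ yields value $0$ against it, whereas staying inside $\win_i^2$ attains $1$. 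Thus playing $a$ is not a best response, so no rationalizable strategy of player $i$ selects $a$.

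For condition~2, assume $(s,q) \notin \win_i^2$, that is, $(s,q) \in \win_j^2$. Here the opponent has a sure-winning, hence rationalizable, strategy, so player $i$'s value is $0$ and every strategy of player $i$ attains $0$ against it. Consequently all of player $i$'s strategies tie as best responses, so for every action $a$ enabled at $s$ the strategy playing $a$ is rationalizable; that is, every enabled action is subjectively rationalizable. Combining the two cases, partitioned by whether $(s,q) \in \win_i^2$, yields the stated equivalence.

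The main obstacle is the first step: pinning down that ``subjectively rationalizable'' coincides with ``optimal'' and, crucially, that a rationalizable opponent actually \emph{punishes} a move into its winning region (the ``only if'' part of condition~1). This is where I must invoke the determinacy and the positional (memoryless) determinacy of reachability/safety games to assert that an optimal strategy preserves its owner's winning region and guarantees the corresponding Boolean payoff; care is also needed to keep all witness strategies memoryless so that they are legitimate strategies in the product game, and to handle the history-indexed form of the rationalizability definition by evaluating best responses at the history terminating in $(s,q)$.
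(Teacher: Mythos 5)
Your proposal is correct, but there is nothing in the paper to compare it against: the paper states this lemma \emph{without proof}, following it only with the interpretive remark that condition~1 means a rational player stays in his/her winning region and condition~2 means arbitrary actions are rational once a player has lost. Your argument therefore fills a genuine gap, and it does so by the natural route: (i) reduce subjective rationalizability in $\game_2$ to Nash-equilibrium pairs of the zero-sum game evaluated at the history ending in $(s,q)$, and use the standard zero-sum fact that equilibrium components are exactly the value-guaranteeing (optimal) strategies; (ii) use memoryless determinacy of the reachability/safety product game to characterize which actions lie in the support of an optimal strategy, split on $(s,q)\in\win_i^2$ versus $(s,q)\notin\win_i^2$. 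Both directions of the equivalence in (i) are needed and you supply both: rationalizable $\Rightarrow$ optimal handles the ``only if'' of condition~1, and optimal $\Rightarrow$ rationalizable (pair any value-guaranteeing strategy with an opponent value-guaranteeing strategy) handles the ``if'' of condition~1 and all of condition~2.

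One point of phrasing deserves tightening, because as literally written it is not quite the right logical chain. In the ``only if'' step and again in your closing paragraph you say that ``a rationalizable opponent actually punishes a move into its winning region,'' and conclude that playing $a$ is not a best response, hence not rationalizable. Note that when player $i$ holds the winning position, \emph{every} strategy of the opponent is rationalizable (she is indifferent at value $0$), including strategies that do not punish; so non-best-response against one particular punishing opponent does not by itself refute rationalizability, which only requires being a best response to \emph{some} rationalizable opponent strategy. The argument goes through only because of your step~(i): any rationalizable strategy of player $i$ must be \emph{optimal}, i.e., guarantee the value against \emph{all} opponent strategies, and the existence of the punishing strategy (the opponent's memoryless sure-winning strategy from $\win_j^2$, which is a legitimate strategy in the game) breaks that guarantee --- yielding utility at most $1-p<1$ when $a$ is played with probability $p>0$, not exactly $0$ as you state, though this quantitative slip is harmless. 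Since you do state the optimality reformulation explicitly before the case analysis, the proof is sound; just route the ``only if'' conclusion through optimality rather than through best response to a specific opponent.
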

The first condition means that P2 thinks that a rational player should
stay within his/her winning region; the second condition means that P2
thinks that  it is rational for a player to take arbitrary actions if he/she
has already lost the game from that state.

 We introduce two functions $\pi_i^2$, for $i\in \{1,2\}$, that maps a state $(s,q) \in \win_i^2$ into a set of subjective rationalizable actions for player $i$ in the game $\game_2$. Formally, for each $i$, the function $\pi_i^2: \win_2^2 \cap( S_i \times Q) \rightarrow 2^{A_i}$ is defined such that, 
\begin{equation}
    \label{eq:sr_policy}
    \pi_i^2(s,q)  = \{a \mid \Delta_2((s,q), a) \in \win_i^2\}. 
\end{equation}

  \begin{theorem}
    \label{thm:sure-win}
    Given $\mbox{HTS} = \langle S\times Q\times Q, A,  \Delta, (s_0,q_0, p_0), \win_1^1 \times Q \rangle$, functions $\pi_2^2 : S\times Q\rightarrow 2^{A_2}$ and $\pi_1^2 : S\times Q\rightarrow 2^{A_1}$ defined by \eqref{eq:sr_policy}, P1 has a \emph{stealthy deceptive sure  winning} strategy if and only if he has a  \emph{sure winning} strategy in the following reachability game:
    \[
        \widetilde \hgame = (S\times Q\times Q, A, \tilde \Delta, (s_0,q_0,p_0), \win_1^1\times Q)
    \]
    where $\tilde \Delta$ is obtained from $\Delta$ by restricting both players' actions as follows: For a given state $(s,q,p)\in S\times Q\times Q$ and action $a\in A$,
if $(s,q)\in \win_1^1$, $\tilde \Delta((s,q,p),a) = \Delta((s,q,p),a)$, otherwise,

    \noindent \textsl{Case I}: $(s,p)\in \win_2^2  $ and $(s,q) \notin \win_1^1$, 
    
$   \tilde \Delta ((s,q,p),a) 
 =$\[\left\{  \begin{array}{ll}
 \Delta((s,q,p), a)  & \text{if $s\in S_1$}, \\
  \Delta((s,q,p),a) &   \text{if $s\in S_2$ and $a\in \pi_2^2(s,p)$},\\
    \uparrow    &  \text{if $s\in S_2$ and $a\notin \pi_2^2(s,p)$}.
\end{array}\right.
\]
where $\uparrow$ means that the transition is undefined.

\noindent \textsl{Case II}: $(s,p)\in \win_1^2$ and $(s,q) \notin \win_1^1$, 

$    \tilde \Delta ((s,q,p),a)= $
 \[  \left\{  \begin{array}{ll}
 \Delta((s,q,p), a)  & \text{if $s\in S_1$ and $a\in \pi_1^2(s,p)$ }, \\
     \uparrow    &  \text{if $s\in S_1$ and $a\notin \pi_1^2(s,p)$},\\
  \Delta((s,q,p),a) &   \text{if $s\in S_2$}.
\end{array}\right.
\] 
 The winning condition is defined by $\win_1^1\times Q$--that is, P1 wins if he reaches the set $\win_1^1 \times Q$.
\end{theorem}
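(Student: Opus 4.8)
The plan is to prove the two directions of the equivalence by showing that the action restrictions defining $\widetilde\hgame$ encode exactly two things: the stealthiness constraint on P1 and the assumption that P2 commits to a subjective rationalizable strategy. Concretely, I will argue that the plays consistent with a strategy of P1 in $\widetilde\hgame$ coincide with the plays in the HTS along which P1 is stealthy and P2 is subjective rationalizable, so that ``P1 sure wins the reachability game $\widetilde\hgame$'' and ``P1 has a stealthy deceptive sure winning strategy'' become the same statement. The central tool is the preceding Lemma characterizing subjective rationalizable actions, applied at the P2-perceived \ac{dfa} component $p$: since the solution of $\game_2$ is computed under the labeling $L_2$, membership of a configuration in $\win_1^2$ or $\win_2^2$ must be tested at the state $(s,p)$ that the HTS maintains, not at $(s,q)$.

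First I would set up the action correspondence. For any HTS state $(s,q,p)$ with $(s,q)\notin\win_1^1$, determinacy of $\game_2$ gives exactly one of $(s,p)\in\win_2^2$ (Case~I) or $(s,p)\in\win_1^2$ (Case~II). In Case~I, the Lemma says P1's subjective rationalizable actions at $(s,p)$ are all enabled actions, since $(s,p)\notin\win_1^2$, matching the unrestricted P1 transitions in $\tilde\Delta$, while P2's subjective rationalizable actions are exactly those keeping her within $\win_2^2$, i.e.\ $\pi_2^2(s,p)$, matching the P2 restriction. In Case~II the roles swap: P1's subjective rationalizable actions are $\pi_1^2(s,p)$ (matching the P1 restriction) and P2's are all enabled actions (matching the unrestricted P2 transitions). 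Thus the transitions retained in $\tilde\Delta$ are precisely the P1-stealthy moves and the P2-subjective-rationalizable moves, and on $\win_1^1\times Q$ no restriction is imposed, consistent with the fact that stealthiness is no longer required once P1's true winning region is reached.

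For the forward direction, given a stealthy deceptive sure winning $\pi_1$ on the HTS, its stealthiness clause forces every action it assigns positive probability outside $\win_1^1\times Q$ to be subjective rationalizable for P1, hence enabled in $\tilde\Delta$; so $\pi_1$ is a valid strategy in $\widetilde\hgame$. Its winning clause guarantees reaching $\win_1^1\times Q$ against every subjective rationalizable P2 strategy; since by the correspondence the P2 plays available in $\widetilde\hgame$ are exactly the subjective rationalizable ones, $\pi_1$ sure wins the reachability game. The reverse direction runs the same argument backward: a sure winning strategy $\tilde\pi_1$ in $\widetilde\hgame$ uses only enabled transitions, which outside $\win_1^1\times Q$ are subjective rationalizable for P1, so it is stealthy; and because every subjective rationalizable play of P2 is realizable in $\widetilde\hgame$, $\tilde\pi_1$ forces $\win_1^1\times Q$ against all of them, so it is winning. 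Finally, reaching $\win_1^1\times Q$ suffices to satisfy $\varphi$ in the true game, because $\win_1^1$ is P1's sure winning region in $\game_1$ and the $q$-component is updated by the true labeling $L_1$.

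The step I expect to be the main obstacle is justifying that local subjective rationalizability, as characterized state-by-state by the Lemma, generates exactly the set of global subjective rationalizable strategies of P2. This requires the observation that in the safety game $\game_2$ faced by P2, any memoryless strategy that stays inside $\win_2^2$ by choosing actions in $\pi_2^2$ is a best response in her perceptual game, while any strategy that ever leaves $\win_2^2$ is not; combined with the fact that from $\win_1^2$ every enabled action is vacuously rationalizable, this yields the needed equality between the plays feasible for P2 in $\widetilde\hgame$ and the plays induced by some subjective rationalizable P2 strategy, which is precisely what makes the two winning conditions identical.
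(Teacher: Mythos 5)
Your proposal is correct and follows essentially the same route as the paper's proof: both arguments rest on the observation that the restrictions defining $\tilde\Delta$ (evaluated at the P2-perceived component $(s,p)$, using the Lemma's characterization) retain exactly P1's stealthy moves and P2's subjective rationalizable moves, so that sure winning in $\widetilde\hgame$ coincides with stealthy deceptive sure winning in the HTS, with P1 switching to his true winning strategy once $\win_1^1\times Q$ is reached. Your write-up is somewhat more explicit than the paper's (separating the two directions and flagging the local-to-global rationalizability step, which the paper treats implicitly), but the underlying argument is the same.
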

\begin{proof}
Before reaching the set $\win_1^1\times Q$, at any state $(s,q,p)$ where $s\in S_2$, if $(s,p)$ is perceived winning by P2 (\ie,, $(s,p)\in \win_2^2$), then P2 will select a subjectively rationalizable action  $a\in \pi_2^2(s,p)$. If $(s,p)$ is not in $\win_2^2$, then any action from P2 is subjective rationzalizable.  At a state $(s,q,p)$ where $s\in S_1$, if $(s,p)\in \win_1^2$ but $(s,q)\notin \win_1^1$, then P1 will select a subjectively rationalizable action  $a\in \pi_1^2(s,p)$ so as not to contradict P2's perception.  If $(s,p)\notin \win_1^2$ and $(s,q)\notin \win_1^1$, then any action of P1 is deemed subjectively rationalizable by P2. The solution of reachability game $\widetilde \hgame$, is a policy $\pi_1^\ast: S\times Q\times Q\rightarrow 
A_1$ that ensures starting from a state where $\pi_1^\ast$ is defined, \emph{no matter which action P2} selects in $\tilde \hgame$, P1 can ensure to  reach a state $(s,q,p)$ with $(s,q)\in \win_1^1$ by following $\pi_1^\ast$, in finitely many steps. By construction, P2 will not know that a misperception exists as P1 takes only subjective rationalizable actions, until P1 reaches $\win_1^1$. After reaching the set, P1 can follow the true winning strategy defined for $\win_1^1$.   
\end{proof}

\begin{example}
 Given the \ac{dfa} shown in Fig.~\ref{fig:dfa},
we construct $\mbox{HTS}$ and $\widetilde \hgame$ shown in Fig.~\ref{fig:product-restrict}. In this figure, the red, dashed edges correspond to actions that are \emph{not}  subjective rationalizable in  P2's perceptual game and thus removed to obtain $\widetilde \hgame$. For example, at state $(3,0,0)$, P2 thinks that it is irrational for P1 to reach $(4,0,0)$ instead of $(2,0,1)$ given P2 misperceives the labels of states and thinks that P1 needs to reach state $2$.

\begin{figure}[H]
    \centering
            \vspace{-1ex}
        \begin{tikzpicture}[->,>=stealth',shorten >=1pt,auto,node distance=2.3cm,           semithick, scale=0.5, transform shape,  square/.style={rectangle}]

        \tikzstyle{every state}=[fill=white]
        \node[initial, ellipse,draw]   (000)                      {$0,0,0$};
        \node[square,draw]           (100) [right   of=000]   {$1,0,0 $};
        \node[ellipse,draw,dashed] (201) [right of =100] {$2,0,1 $};
        \node[square,draw,dashed] (101) [right of =201] {$1,0,1 $};
                \node[ellipse,draw,dashed] (001) [right of =101] {$0,0,1 $};
        \node[square, draw,dashed] (401) [below of=101] {$4,0,1$};
                \node[ellipse, draw,dashed] (301) [left  of=401] {$3,0,1$};
        \node[ellipse, draw,fill=blue!10,dashed] (511) [right of=401] {$5,1,1$};
                \node[square, draw,dashed] (411) [below of=511] {$4,1,1$};
        \node[ellipse, draw,fill=blue!10,dashed] (611) [right of=511] {$6,1,1$};
        \node[square, draw,fill=blue!10,dashed] (711) [below of=611] {$7,1,1$};
                       \node[ellipse, draw,dashed] (311) [below right of=411] {$3,1,1$};
        \node[square, draw] (400) [ below  of=000] {$4,0,0$};
        \node[ellipse,draw,dashed] (300) [right of=400] {$3,0,0$};
        \node[ellipse,draw,fill=blue!10] (510) [below of=400] {$5,1,0$};
        \node[square, draw] (410) [right of=510] {$4,1,0$};

        \node[ellipse,draw,dashed] (310) [right of=410] {$3,1,0$};
         \node[ellipse,draw,dashed] (211) [right of=310] {$2,1,1$};
        \node[ellipse,draw, fill=blue!10] (610) [below right of=510] {$6,1,0$};
                \node[square, draw,fill=blue!10] (710) [ right of=610] {$7,1,0$};
        \path[->]   (000) edge node    {}   (100)
                     (100) edge [bend left] node {} (000)
                    (100) edge [dashed, draw=red] node {} (201)         
                   (201) edge node {} (101)
              (101) edge [bend left, draw=blue!90, dash dot] node {} (201)

                (101) edge node {} (001)  
                (101) edge [draw=blue!90, dash dot]  node {} (401)
                (401) edge [draw=blue!90, dash dot]   node {} (511)
                (511) edge   node {} (611)
                (611) edge   node {} (711)
                (511) edge   node {} (411)
                (411) edge [bend left,draw=blue!90, dash dot]   node {} (511)
                (611) edge [bend left]  node {} (511)
                (411) edge [draw=blue!90, dash dot]   node {} (311)
                (311) edge [bend left]   node {} (411)
                (311) edge  node {} (211)
                (711) edge [bend left]  node {} (611)
                (401) edge  [draw=blue!90, dash dot] node {} (301)
                (301) edge  [ bend left]   node {} (401)

                (301) edge   node {} (201)
                (201) edge [bend left]    node {} (301)

                (001) edge [bend left] node {} (101)

                    
                    (100) edge   node {} (400)
                    (400) edge [dashed, draw=red] node {} (300)
                    (300) edge [dashed, draw=red, bend left]  node {} (400)
                    (300) edge node {} (201)
                   (400) edge  node {} (510)
                    (510) edge  node {} (410)
                    (410) edge [ bend left]  node {} (510)
                    (410) edge [dashed, draw=red] node {} (310)
                    (310) edge [dashed, draw=red, bend left] node {} (410)
                    (310) edge node {} (211)
                    (510) edge  node {} (610)
                    (610) edge [bend left]  node {} (510)
                    (610) edge  node {} (710)
                    (710) edge [bend left]  node {} (610);
    \end{tikzpicture}
        \vspace{-2ex}

    \caption{A graph representing $\mbox{HTS}$ and $\widetilde \hgame$. The blue and dash dot edges are deterministic choices in two-player reachability game $\widetilde \hgame$.  The red and dashed edges are not subjectively rationalizable for P2 and thus removed in $\widetilde \hgame$. Unreachable states in $\widetilde \hgame$ and $\hgame_M$ are drawn dashed.}
    \vspace{-2ex}
    \label{fig:product-restrict}
\end{figure}
 In the reachability game $\widetilde \hgame$, we  calculate the  stealthy deceptive sure winning region for P1, which includes $ \{(5,1,0), (6,1,0), (7,1,0), (4,1,0), (4,0,0)\}$. 
This means that P1 can satisfy his objective deceptively from states $\{4,5,6,7\}$--that is, one state more than the game where P2 does not have misperception. Due to P2's misperception, P2 will not select to go to state $3$ from state $4$--making the state $4$ deceptive sure winning for P1.

\end{example}
 

\subsection{Synthesis of a deceptive almost-sure winning strategy} 
\label{sec:almost-sure-deception}

In synthesizing the deceptive sure winning strategy for P1, we  assumed that P2 actively selects actions  in the  zero-sum game $\tilde \hgame$ to play against P1's objective. However, P2 cannot construct this hypergame transition system and thus may make ``mistakes'', exploitable by P1.
 To see this, let us consider the winning strategy for P2 in the reachability game $\widetilde \hgame$,  $\tilde \pi_2^{\ast}: S\times Q\times Q\rightarrow 2^{A_2}$.
  For P2 to exercise $\tilde \pi_2^\ast$, P2 should know the value of $q$ in the tuple $(s,q,p)$, which means that P2 should have a knowledge about $L_1$. This is not the case. Next, we consider a realistic assumption for P2.


\begin{assumption}
    \label{assume:randomP2}

    For a P2 state $(s,q,p)$ in the $\mbox{HTS}$, any  subjective rationalizable action at $(s,p)$ in $\game_2$ will be be selected by P2 with a non-zero probability. 
\end{assumption}


 The assumption on P2's behavior has the following rationale: At any given state,  the set of subjective rationalizable actions has the same values (either 1 or 0 depending on whether $(s,p)\in \win_2^2$). The assumption allows P2 to select any action in this set at random, instead of the worst-case scenario (considered by solving stealthy deceptive \emph{sure winning} strategy, in Sec.~\ref{sec:sure-win}). Besides, if P2 \emph{never} selects a subjective rationalizable action in her perceptual game, then the game is entirely different as we would have eliminated that action from  the arena. This P2's random choice of subjective rationalizable actions can be considered as opportunities for P1 to exploit.



\begin{theorem}
    Given $\mbox{HTS} = \langle S \times Q \times Q, A,  \Delta, (s_0,q_0, p_0),\win_1^1 \times Q \rangle$, P1 has a \emph{stealthy deceptive almost-sure  winning} strategy if and only if he has an \emph{almost-sure winning} strategy in the following one-player stochastic game:
    \[
        \hgame_M = (V=V_1\cup V_P, A_1, P,v_0, {\cal F}=\win^1_1\times Q ),
    \]
    where the  states are partitioned into two subsets: $V_1=S_1\times Q\times Q$ are a set of P1's states and $V_P= S_2\times Q\times Q$ are a set of \emph{probabilistic states}. 
    The transition function is partially defined as follows. First, any state in $\cal F$ is a sink or absorbing state.  At a  state $(s,q,p)\in V_1 \setminus {\cal F}$, we distinguish two cases:

    \noindent \textsl{Case I-1:} $(s,p)\in \win_2^2 $,  for any   action $a\in A_1$ enabled from $s$, $P ((s',q',p')| (s,q,p),a)=1$ where $(s',q',p')= \Delta((s,q,p),a)$.
 \textsl{Case I-2:}  $(s,p)\in \win_1^2$, for any   action $a\in \pi_1^2(s,p)$, $P((s',q',p')| (s,q,p),a)=1$ where $(s',q',p')=  \Delta((s,q,p),a)$.
 
    At a  state  $(s,q,p)\in V_P$, we distinguish two cases:
    \noindent \textsl{Case II-1: } $(s,p)\in \win_2^2$ , then for any  action $a\in \pi_2^2(s,p)$,     $P((s',q',p')| (s,q,p), a)>0$ where $(s',q',p')= \Delta((s,q,p),a)$.
    \textsl{Case II-2: } $(s,p)\in \win_1^2$, then for any   action $a\in A_2$ enabled from $s$, $P((s',q',p')| (s,q,p), a)>0$ where $(s',q',p')= \Delta((s,q,p),a)$.
    
 The almost-sure winning condition is defined by $\win_1^1\times Q$--that is, P1 wins if he reaches the set $\win_1^1 \times Q$ with probability one.
  \end{theorem}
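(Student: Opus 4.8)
The plan is to mirror the proof of Theorem~\ref{thm:sure-win}, exploiting the fact that $\hgame_M$ has exactly the same state space and stealth-induced action restrictions as the reachability game $\widetilde\hgame$, with the single structural change that P2's adversarial choices are replaced by probabilistic choices reflecting Assumption~\ref{assume:randomP2}. Accordingly, I would decompose the argument into two verifications: (i) that P1's allowed actions (Cases~I-1, I-2) coincide with the stealth requirement, and that P2's positive-probability edges (Cases~II-1, II-2) coincide with the actions P2 can select under Assumption~\ref{assume:randomP2}; and (ii) that almost-sure reachability in the resulting one-player stochastic game is equivalent to the stealthy deceptive almost-sure winning condition on the $\mbox{HTS}$.

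For step (i) I would invoke the characterization of subjective rationalizable actions (the Lemma preceding~\eqref{eq:sr_policy}). At a P1 state $(s,q,p)$ with $(s,p)\in\win_2^2$ (Case~I-1), P2 perceives P1 as already lost, so every enabled action is subjective rationalizable; permitting all of them is thus precisely the stealth constraint. At a P1 state with $(s,p)\in\win_1^2$ (Case~I-2), the subjective rationalizable actions are exactly $\pi_1^2(s,p)$ by~\eqref{eq:sr_policy}. Hence a strategy on the $\mbox{HTS}$ is stealthy (Condition~1 of the stealthy deceptive winning definition) if and only if it is a valid strategy in $\hgame_M$. Symmetrically, the positive-probability edges at P2 states are $\pi_2^2(s,p)$ when $(s,p)\in\win_2^2$ (Case~II-1) and all enabled actions when $(s,p)\in\win_1^2$ (Case~II-2), which are exactly the actions Assumption~\ref{assume:randomP2} requires P2 to select with non-zero probability.

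For step (ii), the core equivalence, I would argue both directions. For ($\Leftarrow$), suppose P1 has an almost-sure winning strategy $\sigma$ in $\hgame_M$; since almost-sure reachability in an MDP admits memoryless strategies, $\sigma:V_1\to A_1$ induces a stealthy strategy on the $\mbox{HTS}$ (its actions lie in the allowed sets), so Condition~1 holds. Because $\sigma$ reaches ${\cal F}=\win_1^1\times Q$ with probability one against $\hgame_M$'s random P2, and because the \emph{qualitative} property of almost-sure reachability in an MDP depends only on the \emph{support} of the transition probabilities rather than their numerical values, $\sigma$ wins with probability one against every P2 behavior admitted by Assumption~\ref{assume:randomP2}; this is exactly Condition~2. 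For ($\Rightarrow$), a stealthy deceptive almost-sure winning strategy uses only allowed actions (hence is a strategy in $\hgame_M$) and wins with probability one against every positive-probability P2, in particular against the specific P2 of $\hgame_M$. As in Theorem~\ref{thm:sure-win}, once $\win_1^1\times Q$ is reached, P1 switches to the genuine sure-winning strategy on $\win_1^1$, after which stealth is no longer required.

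The hard part will be justifying the support-invariance step: that almost-sure reachability is unchanged when the exact transition probabilities vary while their supports are fixed. This is precisely what distinguishes the almost-sure case from the sure-winning case, since it is what lets us replace P2's unknown randomization by an arbitrary fixed distribution with the correct support. I would discharge it by appealing to the standard graph-theoretic (fixed-point) characterization of almost-sure reachability in MDPs, which references only which transitions have positive probability. A secondary point requiring care is confirming that Assumption~\ref{assume:randomP2} guarantees \emph{full} support over subjective rationalizable actions, so that no positive-probability edge of $\hgame_M$ is ever missing in the real interaction, ensuring the transfer of almost-sure winning is sound in both directions.
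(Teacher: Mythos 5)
Your proposal is correct and takes essentially the approach the paper intends: the paper in fact omits this proof, saying only that it is ``similar to that of Thm.~\ref{thm:sure-win}, with small changes to consider randomized actions of P2,'' and your argument supplies exactly those changes---identifying the MDP's allowed actions and transition supports with the stealth constraint and with Assumption~\ref{assume:randomP2}, respectively, and then transferring almost-sure winning via the support-only (qualitative) characterization of almost-sure reachability. Your flagged ``hard part'' is handled in the paper by the graph-based fixed-point computation of Alg.~\ref{alg:asw-mdp}, which depends only on which transitions have positive probability, so your discharge of that step is the same as the paper's.
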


The proof is similar to that of Thm.~\ref{thm:sure-win}, with small changes to consider randomized actions of P2. We omitted the proof due to the lack of space.

It is noted that only the support of $P((s,q,p),a)$ is known but not the exact probability distribution. 
The partial knowledge of the transition probability function gives us
a \emph{graph} of the underlying one-player stochastic
game. The  stealthy deceptive almost-sure
winning strategy for P1 is to ensure, with probability one, a state in
$\win_1^1\times Q$ can be reached.  
    Next, we describe Algorithm~\ref{alg:asw-mdp} to solve the almost-sure stealthy and deceptive winning strategy for P1.
   
\begin{algorithm}[t]
\caption{Computation of the almost-sure winning region and strategy for P1 in the one-player stochastic game.}
\label{alg:asw-mdp}
\begin{algorithmic}[1]
\item[\textbf{Inputs:}] $\hgame_M=(S= V_1\cup V_P, A_1, P,{ \cal F})$.
\item[\textbf{Outputs:} ] $X_k, \{Y_i\}$.
\State $X_0=V$, $Y_0={\cal F}$,  $k\leftarrow 0$,
\While{True}
 $i\leftarrow 0$,
\While{True}
\State $Y_{i+1} = \pre(Y_i,X_k)\cup Y_i $
\If{$Y_i=Y_{i+1}$}
\State Break.
\EndIf
\State  $i \leftarrow i
+1$.
\EndWhile
\If{$Y_i = X_k$} Break.
\EndIf
\State $X_{k+1}= Y_i$, $k\leftarrow k+1$, 
\EndWhile
 \end{algorithmic}
 \end{algorithm}
 The algorithm uses a function $\pre$ defined as follows.
 \begin{multline}
\pre(v, X) = \{ v'\in V_1 \mid \exists a \in A_1, P(v|v',a)=1\}\\
\cup 
\{v'\in V_P\mid P(v|v')>0 \implies v\in X\}
 \end{multline}
 and $\pre(Y,X)  = \cup_{v\in Y}\pre(v,X)$.

Intuitively, the set $\pre(Y,X)$ includes any state starting from which P1 can ensure to reach the set $Y$ with a positive probability, while staying in $X$ with probability one. The following result is readily obtained by construction. 

 \begin{proposition}
 The fix-point $X^\ast = X_k=X_{k+1}$ is the almost-sure winning region for P1 in the one-player stochastic game $\hgame_M$.
 \end{proposition}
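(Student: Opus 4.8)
The plan is to prove the two inclusions $X^\ast \subseteq W$ and $W \subseteq X^\ast$, where $W$ denotes the true almost-sure winning region of $\hgame_M$, and then to observe that the computation terminates because $V$ is finite (the outer iterates $X_k$ are monotonically decreasing and each inner loop computes a monotonically increasing least fixed point). A preliminary fact I would record is that, for a qualitative (probability-one) reachability objective in a one-player stochastic game, the winning region depends only on the \emph{support} of the transition function $P$ and not on the exact probabilities. This is precisely what makes Algorithm~\ref{alg:asw-mdp} well defined, since only the graph of $\hgame_M$, i.e. the supports of $P$, is available.

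For soundness ($X^\ast \subseteq W$), I would use that at the fixed point $X^\ast = X_k = X_{k+1}$ the inner loop has reached its least fixed point, so $X^\ast$ is exactly the $\pre$-closure of ${\cal F}$ relative to the safe set $X^\ast$. Consequently the inner iterates $Y_0 \subseteq Y_1 \subseteq \cdots$ assign to every state of $X^\ast$ a finite rank, namely the least index $i$ with the state in $Y_i$. By the definition of $\pre$, from any state of rank $i+1$ player P1 can, staying inside $X^\ast$ with probability one, move with positive probability to a state of strictly smaller rank or directly into ${\cal F}$. Fixing the induced memoryless strategy, the play never leaves $X^\ast$; and since $V$ is finite there is a uniform lower bound $\eta>0$ on the probability of reaching ${\cal F}$ within $|V|$ steps from any state of $X^\ast$. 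Splitting the infinite play into consecutive windows of length $|V|$ and applying a Borel--Cantelli / geometric argument, the probability of never reaching ${\cal F}$ is bounded by $\lim_{n\to\infty}(1-\eta)^n = 0$, so ${\cal F}$ is reached with probability one.

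For completeness ($W \subseteq X^\ast$), I would argue by induction on the outer loop that $W \subseteq X_k$ for all $k$. The base case $W \subseteq X_0 = V$ is trivial. For the inductive step, assume $W \subseteq X_k$. The key structural fact is that $W$ is closed under the dynamics of any almost-sure winning strategy: from a state of $W$, every successor reached with positive probability must again lie in $W$, otherwise there would be a positive-probability continuation that never meets ${\cal F}$, contradicting probability-one reachability. Hence such a strategy keeps the play inside $W \subseteq X_k$ with probability one while maintaining positive probability of reaching ${\cal F}$; by the definition of $\pre$ this places every state of $W$ in the inner fixed point, i.e. $W \subseteq X_{k+1}$. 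Passing to the limit gives $W \subseteq X^\ast$, and together with soundness, $X^\ast = W$.

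The main obstacle I anticipate is the soundness direction, and specifically the passage from ``positive probability of progress at each step'' to ``probability one of eventually reaching ${\cal F}$.'' This does not follow from single-step positivity alone; it requires simultaneously guaranteeing that the chosen strategy never leaves $X^\ast$ (so the window argument can be repeated indefinitely) and extracting, from finiteness of $V$, a uniform horizon $|V|$ and a uniform lower bound $\eta$ on the per-window success probability. Making these two guarantees coexist in one memoryless strategy, using only the supports of $P$, is the crux; the ranking function supplied by the inner iterates $\{Y_i\}$ is exactly what makes this construction work.
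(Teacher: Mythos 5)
Your proof is correct; the more important observation is that the paper itself offers no proof of this proposition, stating only that it ``is readily obtained by construction'' from the intuitive reading of $\pre(Y,X)$ as one-step positive-probability progress toward $Y$ while staying in $X$ with probability one. What you supply is the standard correctness argument for the nested fixed-point characterization of almost-sure reachability in a one-player stochastic game: soundness via the rank function that the inner iterates $\{Y_i\}$ induce on $X^\ast$, a memoryless strategy that surely remains in $X^\ast$ while decreasing rank with positive probability, and the finite-window geometric bound; completeness by induction over the outer iterates, using closure of the true winning region $W$ under the one-step dynamics of any almost-sure winning strategy. Your opening remark that the qualitative winning region depends only on the support of $P$ is likewise exactly what makes the paper's algorithm well posed, since only the support is known. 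Two steps should be made explicit to make the sketch fully rigorous, though both are routine: (i) in the completeness step, the claim ``by the definition of $\pre$ this places every state of $W$ in the inner fixed point'' needs a finite positive-probability witness path from $w \in W$ into ${\cal F}$ that stays in $W \subseteq X_k$, followed by a backward induction along that path (using that at a probabilistic state of $W$ \emph{all} successors lie in $W$, so the $V_P$ clause of $\pre$ applies); and (ii) monotone decrease of the outer iterates is not immediate from the definition of $\pre$, whose $V_1$ clause places no constraint relative to $X$, but follows from monotonicity of the inner fixed point in its safe-set argument: writing $Y_\infty(X)$ for that fixed point, one has $X_{k+1}=Y_\infty(X_k)\subseteq Y_\infty(X_{k-1})=X_k$. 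Neither point is a gap in the approach; with them included, your argument is a complete proof of a claim the paper leaves unproved.
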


Given the fixed point $X^\ast$, let $Y_0,Y_1,\ldots, Y_k$ be a sequence of states computed using $X= X^\ast$ in the inner loop, we can extract P1's deceptive almost-sure winning strategy $\pi_1$ as follows. For each $v\in Y_i \setminus Y_{i-1}$, $i> 0$,
 $
 \pi_1(v,a) =1 \text{ if } P(Y_{i-1} \mid v,a)=1
$.
 After reaching $\calF$, P1 follows his sure winning strategy in $\win^1_1$.
 


  \begin{example}

  The edges $(1,0,0)\rightarrow (0,0,0)$ and
  $(1,0,0)\rightarrow (4,0,0)$ are in Fig.~\ref{fig:product-restrict} are now
  probabilistic choices of
  P2. 
We compute 
\begin{inparaenum}
        \item $Y_0= \win^1_1\times Q = \{ (5,1,0), (6,1,0), (7,1,0)\}$. (here we omitted unreachable states.)
    \item $Y_1= \{(4,1,0), (4,0,0)\} \cup Y_0$.
    \item $Y_2= \{(1,0,0) \} \cup Y_1$,
    \item $Y_3= \{(0,0,0)\}\cup Y_2$.
\end{inparaenum}
Because $Y_4=Y_3$. The inner loop of Alg.~\ref{alg:asw-mdp} ends. Because now all reachable states in $X_0$ are in $Y_3$. We have $X_0 =Y_3$ and the outer loop of Alg.~\ref{alg:asw-mdp} ends.
Thus, the deceptive almost-sure  winning region  includes \emph{all states of the game}. 
 \end{example}
It is noted that   the solutions of deceptive strategies are based on solving multiple games (two-player zero-sum, turn-based games and one-player stochastic games). The space/time complexity  is linear in the size of \mbox{HTS} for solving the deceptive sure winning strategy, and polynomial for solving the deceptive almost-sure winning strategy.

\section{Conclusion and discussions}
\label{sec:conclude}
 
This paper presents a theory of hypergame for synthesizing stealthy  deceptive strategies with temporal logic specifications. We have shown that different from the games with complete information where the sure winning and almost-sure winning region overlap, the deceptive sure winning and almost-sure winning regions are different when one player has incomplete or incorrect information. 

This work lays the foundation for multiple future directions for both theoretical advances and algorithmic development.  One extension is to investigate the application of game-theoretic synthesis to cyber-physical security with decoy-based deception. This extension requires us to generalize the Assumption~\ref{assume:P2change-information-action} to incorporate other inference mechanisms. For example, if P2 can detect the true labeling after interacting with the decoy nodes, then P1 could include safety (prevent P2 from reaching decoys) as part of stealthy deception objective. In addition, we will extend the theory of hypergames to concurrent games on graphs \cite{DeAlfaro2007} and investigate the solution of this class of hypergames. 

\bibliographystyle{IEEEtran}
\bibliography{refs}

\end{document}